\newtheorem{theorem}{Theorem}
\newtheorem{lemma}{Lemma}
\newtheorem{corollary}{Corollary}
\newtheorem{definition}{Definition}
\newcommand{\eps}{\epsilon}
\newcommand{\now}{\mathtt{now}}
\newcommand{\OPT}{\mathtt{OPT}}
\newcommand{\ALG}{\mathcal{A}}
\newcommand{\sieve}{{\small\tt SieveStream}}
\newcommand{\swrd}{{\small\tt SW-RD}}
\newcommand{\snaive}{{\small\tt SieveNaive}}
\newcommand{\sgreedy}{{\small\tt SieveGreedy}}
\newcommand{\random}{{\small\tt Random}}
\newcommand{\greedy}{{\small\tt Greedy}}
\newcommand{\eye}{{\small\tt Eyes}}
\newcommand{\gas}{{\small\tt GasSensor}}
\newcommand{\wc}{{\small\tt WorldCup}}
\renewcommand{\S}{\mathcal{S}}
\DeclareMathOperator*{\argmax}{argmax}
\renewcommand{\paragraph}[1]{\medskip \noindent {\bf #1.}}
\newenvironment{proof}{\trivlist\item[]\emph{Proof}:}%
{\unskip\nobreak\hskip 1em plus 1fil\nobreak$\Box$
\parfillskip=0pt%
\endtrivlist}
\renewcommand{\paragraph}[1]{\medskip \noindent {\bf #1.}}
\begin{document}

\title{Submodular Maximization over Sliding Windows}

\author{Jiecao Chen\\Indiana University Bloomington\\
  jiecchen@umail.iu.edu
\and  Huy L. Nguyen\\Northeastern University,\\
hlnguyen@cs.princeton.edu
\and Qin Zhang\\Indiana University Bloomington\\
qzhangcs@indiana.edu}

\date{}


\maketitle

\begin{abstract}
In this paper we study the extraction of representative elements in the data stream model in the form of submodular maximization.  Different from the previous work on streaming submodular maximization, we are interested only in the recent data, and study the maximization problem over sliding windows.   We provide a general reduction from the sliding window model to the standard streaming model, and thus our approach works for general constraints as long as there is a corresponding streaming algorithm in the standard streaming model. As a consequence, we obtain the first algorithms in the sliding window model for maximizing a monotone/non-monotone submodular function under cardinality and matroid constraints. We also propose several heuristics and show their efficiency in real-world datasets.
\end{abstract}

\section{Introduction}
\label{sec:intro}

The last few decades have witnessed an explosion in the amount of data involved in machine learning tasks. In many cases, the data volume exceeds our storage capacity and demands new techniques that can effectively learn while operating within stringent storage and computation limits. {\em Streaming algorithms}~\cite{Muthukrishnan05} have emerged as a powerful approach to cope with massive data volume. In this model, the learning algorithm processes the dataset one element at a time by a linear scan and keeps only a small summary of the dataset in the memory; it is then able to compute the objective function on the processed elements using the summary. Various popular techniques in machine learning operate in this model, such as stochastic gradient descent, perceptron, etc. This model is particularly useful when the dataset is too large to fit in the memory or if the data is generated in real time (e.g., in online learning).

A common issue in online learning/streaming data mining is that the underlying distribution that generates the data may change over time. Therefore we had better consider only the most recently data in the stream.
Streaming algorithms over sliding windows have already been designed for several problems, including $k$-median clustering~\cite{BDMO03}, kernel least square regression \cite{VVS06}, $k$-means and coreset construction~\cite{BLLM15},  etc.
  
The window size can again be very large and does not fit the main memory. The problem becomes more severe in the case when kernel method is applied to deal with the nonlinear system -- the resulting kernel matrix may need $O(W^2)$ memory where $W$ is the window size.  A natural idea to resolve this issue is to select representative data items from the window for computing the objective function.

Submodular functions, an abstract and broad class of functions, have recently become an attractive candidate for modeling a variety of scenarios in machine learning, from exemplar-based clustering~\cite{KG10}, summarization~\cite{SSSJ12} to determinantal point processes~\cite{GKT12}. In recent years there have been quite some work on designing streaming algorithm for optimizing submodular functions~\cite{KG10,KMVV15,BMKK14,CK15,CGQ15}. However, we are not aware of any previous work dealing with streaming data over sliding windows in the context of submodular optimization. 

In this work, we present a general reduction from the sliding window model to the standard streaming model. As a consequence, we immediately obtain algorithms in the sliding window model by combining with previous works in the standard streaming model for cardinality constraints~\cite{KG10,KMVV15,BMKK14}, matroid and matching constraints~\cite{CK15}, and non-monotone functions~\cite{CGQ15}.  We also propose a few heuristics and compare their performance on real-world datasets.

\newcommand{\calI}{\mathcal{I}}
\section{Preliminaries}
\label{sec:pre}
Let $V$ be the finite  ground set (possibly multiset). Let $f:2^V \rightarrow \mathbb{R}$ be a function mapping subsets of $V$ to numbers. We say $f$ is \emph{submodular} if $f(A \cup \{v\}) - f(A) \geq f(B \cup \{v\}) - f(B)$ for any $A \subseteq B \subset V$, and $v \in V\backslash B$.   We define $f(v|A) = f(A\cup \{v\}) - f(A)$ as the \emph{marginal gain} of $v$ given $A$.  If further we have $f(A) \leq f(B)$ for any $A \subseteq B \subseteq V$, we say $f$ is \emph{monotone}.

In the general form of \emph{constrained submodular maximization} problem, we consider solving
\begin{equation}
  \label{eq:general}
 \argmax_{S\in \calI} f(S),
\end{equation}
where $\calI$ is a collection of subsets of $V$ which we call the \emph{constraint}. In particular, when $\calI = \{ S \subseteq V ~|~ |S| \leq k\}$, Expression (\ref{eq:general}) is known as the \emph{cardinality constrained submodular maximization} problem. Definitions for other constraints can be found in e.g. \cite{B14}.  We say constraint $\calI$ \emph{hereditary} if $A \in \calI$ implies that any subset of $A$ is also in $\calI$.

For a constrained submodular maximization problem, we use $\OPT$ to represent the solution of (\ref{eq:general}).  W.l.o.g.\ we assume $1 \le f(\OPT) \le M$ for a parameter $M$.

In the streaming model, we consider the ground set $V$ as an ordered sequence of {\em items} $e_1, e_2, \ldots, e_n$, each consumes one word of space. Each index is called a \emph{time step}.  In the sliding window model, we specify the window size $W$. At any time step, we are only interested in the most recent $W$ items which defines the ground set $V$ at that moment. Note that when $W \rightarrow \infty$, $V$ becomes the set of all received items; we thus also call the standard streaming model  the infinite window model. For two streams/sequences $S_1$ and $S_2$, let $S_1\circ S_2$ be their concatenation, which is $S_1$ followed by $S_2$.

Let $\ALG$ be an algorithm solving the constrained submodular maximization problem. We use $\ALG(V)$ to represent the function value of the solution returned by $\ALG$ operating on the ground set $V$.  


\section{Algorithms}
\label{sec:algo}

\subsection{A Reduction to the Infinite Window Case}
\label{sec:reduction}

In this section we design sliding window algorithms by a generic reduction to standard infinite window streaming algorithms.  Our reduction can work with any standard streaming algorithms that satisfy some natural properties. Throughout this section, we assume that the constraint in the submodular maximization problem is hereditary (see the preliminaries for definition).

\begin{definition}
\label{def:good-algo}

Let $\ALG$ be an algorithm operating on a stream of elements, and $\ALG(S)$ be the value of the solution found by $\ALG$ on the stream $S$. We say $\ALG$ is $c$-compliant if it satisfies the following conditions:
\begin{itemize}
\item {\em (Monotonicity)} If $S_1$ is a prefix of $S_2$ then $\ALG(S_1) \le \ALG(S_2)$.
\item {\em ($c$-Approximation)} For any stream $S$, let $\OPT$ be the optimal solution. Then $\ALG(S) \ge c\cdot f(\OPT)$.
\end{itemize}
\end{definition}

The following lemma is similar to the smooth histogram technique introduced in \cite{BO07}. However,  smooth histogram only works for problems with good approximations. More precisely, it is assumed that there is an algorithm with $(1-\eps)$ approximation where $\eps < 1/4$, which does not hold for many submodular problems. Our algorithm (Algorithm \ref{alg:reduction}) also works for streaming algorithms \emph{without} the monotonicity property, though monotonicity allows us to get a better approximation. Any streaming algorithm can be modified to satisfy the monotonicity property at the cost of increased update time: after every update, the algorithm calculates a solution and it keeps track of the best solution over time.

\begin{lemma}
\label{lem:smooth}
Let stream $S_1\circ S_2 \circ S_3$ be the concatenation of three (sub)streams $S_1, S_2, S_3$. Given a $c$-compliant algorithm $\ALG$, if $\ALG(S_1\circ S_2) \le (1+\eps) \ALG(S_2)$, then $\ALG(S_2\circ S_3) \ge \frac{c}{2+\eps}\cdot f(\OPT_{123})$ where $\OPT_{123}$ is the optimal solution for the stream $S_1\circ S_2 \circ S_3$.
\end{lemma}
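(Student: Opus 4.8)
The plan is to bound $\ALG(S_2 \circ S_3)$ from below by comparing it to $f(\OPT_{123})$, the optimum over the whole stream. The key idea is to split $\OPT_{123}$ into the part that "lives" in $S_1$ and the part that lives in $S_2 \circ S_3$, and to argue that the first part cannot contribute much because $\ALG(S_1 \circ S_2)$ is only slightly larger than $\ALG(S_2)$.

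Let me think about what we have. We know $\ALG$ is $c$-compliant: monotone under taking prefixes, and a $c$-approximation. The hypothesis is $\ALG(S_1 \circ S_2) \le (1+\eps)\ALG(S_2)$. We want to conclude $\ALG(S_2 \circ S_3) \ge \frac{c}{2+\eps} f(\OPT_{123})$.

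**Key steps.**

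First I would partition $\OPT_{123}$ (as a subset of the elements in $S_1 \circ S_2 \circ S_3$) into $O_1 = \OPT_{123} \cap S_1$ and $O_{23} = \OPT_{123} \cap (S_2 \circ S_3)$. Since the constraint is hereditary, both $O_1$ and $O_{23}$ are feasible. By submodularity (subadditivity of the objective on a partition — which follows from submodularity when... actually I should be careful: submodular functions are subadditive only if $f(\emptyset) \ge 0$; but the standard fact $f(A \cup B) \le f(A) + f(B)$ for submodular $f$ with $f(\emptyset)\ge 0$ should be available here, and $f(\OPT)\ge 1>0$), we have $f(\OPT_{123}) = f(O_1 \cup O_{23}) \le f(O_1) + f(O_{23})$.

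Second, I bound $f(O_{23})$. Since $O_{23}$ is a feasible solution using only elements of $S_2 \circ S_3$, the $c$-approximation property of $\ALG$ on the stream $S_2 \circ S_3$ gives $\ALG(S_2 \circ S_3) \ge c \cdot f(\OPT_{S_2 \circ S_3}) \ge c \cdot f(O_{23})$, hence $f(O_{23}) \le \ALG(S_2 \circ S_3)/c$.

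Third — and this is the crux — I bound $f(O_1)$. Here $O_1 \subseteq S_1$ is feasible, so $\ALG(S_1) \ge c \cdot f(O_1)$, i.e. $f(O_1) \le \ALG(S_1)/c$. Now by monotonicity, $\ALG(S_1) \le \ALG(S_1 \circ S_2) \le (1+\eps)\ALG(S_2)$. And again by monotonicity $\ALG(S_2) \le \ALG(S_2 \circ S_3)$. Chaining: $f(O_1) \le \ALG(S_1)/c \le (1+\eps)\ALG(S_2)/c \le (1+\eps)\ALG(S_2\circ S_3)/c$.

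**Combining.** Putting these into the subadditivity bound: $f(\OPT_{123}) \le f(O_1) + f(O_{23}) \le \frac{1+\eps}{c}\ALG(S_2 \circ S_3) + \frac{1}{c}\ALG(S_2 \circ S_3) = \frac{2+\eps}{c}\ALG(S_2\circ S_3)$, which rearranges to the claimed $\ALG(S_2 \circ S_3) \ge \frac{c}{2+\eps} f(\OPT_{123})$.

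**Anticipated obstacle.** The main thing to nail down carefully is the subadditivity step $f(O_1 \cup O_{23}) \le f(O_1) + f(O_{23})$: this requires $f(\emptyset) \ge 0$ (or more generally nonnegativity of $f$), which we should state as a standing assumption — it is consistent with $1 \le f(\OPT)$. A secondary subtlety is the treatment of $V$ as a multiset: when we write $\OPT_{123} \cap S_1$ we mean the multiset of occurrences of $\OPT_{123}$-elements that actually appear within the substream $S_1$, and the "hereditary" property must be read at the level of these occurrences; this bookkeeping is routine but worth a sentence. Everything else is just the monotonicity chain, which is immediate from $c$-compliance.
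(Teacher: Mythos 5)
Your proof is correct and follows essentially the same route as the paper's: both decompose $\OPT_{123}$ into its $S_1$ part and its $S_2\circ S_3$ part, bound each piece via the $c$-approximation property together with the hypothesis and prefix-monotonicity, and recombine by subadditivity (your detour through $\ALG(S_1)$ rather than through $f(\OPT_{12})$ is an immaterial variation). Your explicit remark that the subadditivity step $f(O_1)+f(O_{23})\ge f(O_1\cup O_{23})$ requires $f(\emptyset)\ge 0$ is a point the paper's proof leaves implicit, and is worth keeping.
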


\begin{proof}
Let $\OPT_{123}, \OPT_{23}, \OPT_{12}$ be the optimal solution for the streams $S_1\circ S_2\circ S_3, S_2\circ S_3, S_1\circ S_2$, respectively. We have 
\begin{equation}
\label{eq:b-1}
\textstyle \frac{1}{c} \cdot \ALG(S_2\circ S_3) \ge f(\OPT_{23}).
\end{equation}
We also have 
\begin{equation}
\label{eq:b-2}
\textstyle f(\OPT_{12}) \le \frac{1}{c} \cdot \ALG(S_1\circ S_2) \le \frac{1+\eps}{c} \cdot \ALG(S_2) \le \frac{1+\eps}{c} \cdot \ALG(S_2\circ S_3).
\end{equation}

Combining (\ref{eq:b-1}) and (\ref{eq:b-2}), we obtain
\begin{eqnarray*}
\textstyle \frac{2+\eps}{c} \cdot \ALG(S_2 \circ S_3) &\ge& f(\OPT_{12})+f(\OPT_{23}) \\ 
&\ge& f(\OPT_{123}\cap S_1)\\ & &+ f(\OPT_{123}\cap (S_2\circ S_3)) \\ &\ge& f(\OPT_{123}).
\end{eqnarray*}
\end{proof}

We can also show a similar lemma for algorithms satisfying the $c$-approximation property but not monotonicity.

\begin{lemma}
\label{lem:smooth2}
Let stream $S_1\circ S_2 \circ S_3$ be the concatenation of three (sub)streams $S_1, S_2, S_3$. Given an algorithm $\ALG$ with $c$-approximation property, if $\ALG(S_1\circ S_2) \le (1+\eps) \ALG(S_2)$, then $\ALG(S_2\circ S_3) \ge \frac{c^2}{c+1+\eps}\cdot f(\OPT_{123})$ where $\OPT_{123}$ is the optimal solution for the stream $S_1\circ S_2 \circ S_3$.
\end{lemma}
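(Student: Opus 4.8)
The plan is to follow the proof of Lemma~\ref{lem:smooth} almost verbatim, isolating the one place where the monotonicity property was used and replacing it by a weaker argument that costs an extra factor of $c$. As before, write $\OPT_{123}, \OPT_{23}, \OPT_{12}$ for the optimal solutions on the streams $S_1\circ S_2\circ S_3$, $S_2\circ S_3$, and $S_1\circ S_2$, respectively. The $c$-approximation property gives $\ALG(S_2\circ S_3) \ge c\cdot f(\OPT_{23})$ and $\ALG(S_1\circ S_2) \ge c\cdot f(\OPT_{12})$; combining the latter with the hypothesis $\ALG(S_1\circ S_2)\le (1+\eps)\ALG(S_2)$ yields $f(\OPT_{12}) \le \frac{1+\eps}{c}\cdot\ALG(S_2)$, exactly as in the monotone case.

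The step that used monotonicity was the inequality $\ALG(S_2)\le \ALG(S_2\circ S_3)$. Without it, I would instead observe that the solution $\ALG$ returns on the stream $S_2$ is a feasible set consisting only of elements of $S_2$, hence it is also a feasible solution for the instance on $S_2\circ S_3$ and has the same $f$-value there. Therefore $\ALG(S_2)\le f(\OPT_{23}) \le \frac{1}{c}\cdot\ALG(S_2\circ S_3)$. Chaining this with the bound from the previous paragraph gives $f(\OPT_{12}) \le \frac{1+\eps}{c}\cdot\frac{1}{c}\cdot\ALG(S_2\circ S_3) = \frac{1+\eps}{c^2}\cdot\ALG(S_2\circ S_3)$.

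I would then close exactly as in Lemma~\ref{lem:smooth}: by submodularity and non-negativity of $f$ we have $f(\OPT_{123}) \le f(\OPT_{123}\cap S_1) + f(\OPT_{123}\cap(S_2\circ S_3))$, and since the constraint is hereditary each of these two terms is at most $f(\OPT_{12})$ and $f(\OPT_{23})$ respectively. Substituting the two bounds $f(\OPT_{12})\le \frac{1+\eps}{c^2}\cdot\ALG(S_2\circ S_3)$ and $f(\OPT_{23})\le \frac{1}{c}\cdot\ALG(S_2\circ S_3)$ gives $f(\OPT_{123}) \le \left(\frac{1+\eps}{c^2}+\frac{1}{c}\right)\ALG(S_2\circ S_3) = \frac{c+1+\eps}{c^2}\cdot\ALG(S_2\circ S_3)$, which rearranges to the claimed inequality $\ALG(S_2\circ S_3)\ge \frac{c^2}{c+1+\eps}\cdot f(\OPT_{123})$.

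I do not expect a genuine obstacle here; the only point that needs care is the replacement for monotonicity, namely justifying $\ALG(S_2)\le f(\OPT_{23})$ — i.e.\ that a feasible solution produced on the restricted ground set $S_2$ stays feasible, with unchanged $f$-value, when regarded inside the larger ground set $S_2\circ S_3$. Everything else is the same computation as in the monotone lemma, with one extra factor of $1/c$ inserted, which is precisely why $\frac{c}{2+\eps}$ degrades to $\frac{c^2}{c+1+\eps}$.
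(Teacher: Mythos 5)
Your proof is correct and follows essentially the same route as the paper: the paper likewise replaces the monotonicity step by the chain $f(\OPT_{12}) \le \frac{1+\eps}{c}\ALG(S_2) \le \frac{1+\eps}{c}f(\OPT_{23}) \le \frac{1+\eps}{c^2}\ALG(S_2\circ S_3)$ and then closes exactly as in Lemma~\ref{lem:smooth}. Your explicit justification of $\ALG(S_2)\le f(\OPT_{23})$ (feasibility of the returned set on the larger ground set) is the step the paper leaves implicit.
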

The proof is similar to that of Lemma \ref{lem:smooth2}. The major modification is to use the following inequality, which does not require the monotonicity property, as a replacement of (\ref{eq:b-2}),
\begin{eqnarray*}
\label{eq:b-4}
f(\OPT_{12}) &\le& \frac{1}{c} \cdot \ALG(S_1\circ S_2) \le \frac{1+\eps}{c} \cdot \ALG(S_2)  \\
            &\le& \frac{(1+\eps)}{c}f(\OPT_{23})  
            \le \frac{1+\eps}{c^2} \ALG(S_2\circ S_3). 
\end{eqnarray*}



We have the following theorem.
\begin{theorem}
\label{thm:reduction}

There is an algorithm for constrained submodular maximization over sliding windows that achieves a $c/(2+\eps)$-approximation using $O(s/\eps \cdot \log M)$ space and $O(t/\eps \cdot \log M)$ update time per item, provided that there is a corresponding $c$-compliant streaming algorithm using $s$ space and $t$ update time per item.
\end{theorem}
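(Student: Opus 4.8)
The plan is to build the sliding window algorithm as a collection of instances of the $c$-compliant streaming algorithm $\ALG$, each started at a different time step, and to use Lemma~\ref{lem:smooth} to prune instances that are redundant so that only $O(\log M / \eps)$ of them survive at any moment. Concretely, I would maintain a list of "checkpoints" $t_1 < t_2 < \cdots < t_m$, where for each checkpoint $t_i$ we run a copy $\ALG_i$ of the streaming algorithm on the substream of items arriving at times $\geq t_i$. When a new item arrives, feed it to every active instance and also spawn a fresh instance starting at the current time. The invariant to enforce after each arrival is: for every three consecutive checkpoints $t_i < t_{i+1} < t_{i+2}$ that are all still inside (or straddle) the current window, we do \emph{not} have $\ALG_i \le (1+\eps)\ALG_{i+2}$; equivalently, whenever $\ALG_i(\text{current suffix from } t_i) \le (1+\eps)\ALG_{i+2}(\text{current suffix from } t_{i+2})$ we delete the middle instance $\ALG_{i+1}$.

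First I would argue correctness of the approximation. At query time, the true window starts at some time $t^{*}$, and by the spawning rule there is a checkpoint $t_i \le t^{*}$ with $t_{i+1} > t^{*}$ (or $t_i$ is the most recent checkpoint). Write the stream as $S_1 \circ S_2 \circ S_3$ where $S_1$ is the items from $t_i$ up to $t^{*}$, $S_2$ is from $t^{*}$ to $t_{i+1}$, and $S_3$ is from $t_{i+1}$ to now; so $S_2 \circ S_3$ is exactly the window. The pruning invariant (applied to the situation that caused $t^{*}$ to fall between $t_i$ and $t_{i+1}$, i.e. no checkpoint was kept at $t^{*}$) gives $\ALG(S_1 \circ S_2) \le (1+\eps)\ALG(S_2)$. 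Then Lemma~\ref{lem:smooth} yields $\ALG_i(S_1 \circ S_2 \circ S_3) \ge \frac{c}{2+\eps} f(\OPT_{123}) \ge \frac{c}{2+\eps} f(\OPT_W)$ since the optimum over $S_1 \circ S_2 \circ S_3$ dominates the optimum over the sub-window $S_2 \circ S_3 = W$ (here we use that the constraint is hereditary, so $\OPT_W$ is feasible for the larger stream). Finally monotonicity of $\ALG$ (the first property of $c$-compliance) gives $\ALG_i(\text{suffix from } t_i) \ge \ALG_i(S_1 \circ S_2 \circ S_3)$ if needed, or more simply the instance $\ALG_i$ has processed exactly $S_1 \circ S_2 \circ S_3$, so reporting the best solution among all surviving instances achieves the claimed ratio. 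One subtlety: the checkpoint $t_i$ might itself have expired from the window between the time the invariant was checked and the query; I would handle this by noting that instances whose starting time has left the window are simply discarded, and by the spawning rule a replacement checkpoint at or after $t^{*}$ exists, so we can always find a valid triple — this bookkeeping is the part that needs the most care.

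Next I would bound the space and update time. The key combinatorial claim is that the number of surviving checkpoints is $O(\log_{1+\eps} M) = O(\log M / \eps)$. This follows because the solution values $\ALG_i$ of the surviving instances, read from oldest checkpoint to newest, are non-increasing (older instances have seen a superset of items, so by monotonicity $\ALG_i \ge \ALG_{i+1}$), they lie in the range $[1, M]$, and the pruning invariant forces any two instances that are two apart in the list to differ by a factor more than $1+\eps$ — hence $\ALG_i > (1+\eps)\ALG_{i+2}$, so $\ALG_1 > (1+\eps)^{\lfloor (m-1)/2 \rfloor} \ALG_m \ge (1+\eps)^{\lfloor (m-1)/2 \rfloor}$, forcing $m = O(\log M / \eps)$. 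Since each surviving instance uses $s$ space and $t$ update time, and on each arrival we touch all $O(\log M / \eps)$ instances plus spawn one new one (whose cost is charged, and which is deleted soon after if redundant), the total space is $O(s \log M / \eps)$ and amortized update time $O(t \log M / \eps)$; a standard argument converts the amortized bound to worst-case, or one simply states it as amortized. The main obstacle I anticipate is not any single hard computation but rather getting the window-boundary bookkeeping exactly right — ensuring that at query time a valid triple $(S_1, S_2, S_3)$ with the pruning inequality actually exists given that checkpoints expire and that the invariant was only ever checked at past time steps, not continuously — together with verifying that $\ALG$'s monotonicity and $c$-approximation are invoked over the correct substreams. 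For the non-monotone case one repeats the argument with Lemma~\ref{lem:smooth2} in place of Lemma~\ref{lem:smooth}, obtaining the $c^2/(c+1+\eps)$ ratio instead.
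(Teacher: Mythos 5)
Your overall architecture matches the paper's: staggered instances of $\ALG$ started at every time step, smooth-histogram-style pruning that maintains $\ALG^{(t_j)} > (1+\eps)\ALG^{(t_{j+2})}$ between survivors, the resulting $O(\eps^{-1}\log M)$ bound on the number of live instances, and an appeal to Lemma~\ref{lem:smooth}. The space and update-time accounting is essentially correct. But the correctness argument has a genuine gap, and it is exactly the part you flagged as ``bookkeeping'': your decomposition of the stream is the wrong one, and as a consequence you apply Lemma~\ref{lem:smooth} to the wrong quantity and propose to report an infeasible solution. You set $S_2\circ S_3$ equal to the window (starting at the window boundary $t^{*}$) and output the instance $\ALG_i$ started at $t_i \le t^{*}$. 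That instance has processed elements preceding the window, so its solution may contain expired elements and is not a valid answer to the sliding-window problem; moreover no surviving instance computes $\ALG(S_2)$ or $\ALG(S_2\circ S_3)$ for your choice of $S_2$, so both the hypothesis and the conclusion of Lemma~\ref{lem:smooth} refer to values the algorithm never holds. The bound you write, on $\ALG_i(S_1\circ S_2\circ S_3)$, is also not what the lemma provides: the lemma lower-bounds $\ALG(S_2\circ S_3)$, and that is its entire point --- an instance that has seen only a suffix is still competitive with the optimum of the longer stream.

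The paper anchors the decomposition differently. Let $t_b$ be the earliest \emph{surviving} checkpoint inside the window $[i-W+1,i]$; the algorithm reports $\ALG^{(t_b)}(e_{t_b},\ldots,e_i)$, which starts inside the window and is therefore feasible. Let $t_c$ be the time step at which the last instance started in $[i-W+1, t_b)$ was pruned, and let $t_a < i-W+1$ be the latest checkpoint still active at that pruning step (it exists because pruning only removes instances strictly between two active ones). Then $S_1=(e_{t_a},\ldots,e_{t_b-1})$, $S_2=(e_{t_b},\ldots,e_{t_c})$, $S_3=(e_{t_c+1},\ldots,e_i)$: the inequality $(1+\eps)\ALG(S_2)\ge \ALG(S_1\circ S_2)$ is certified at the \emph{historical} pruning time $t_c$, between two instances that both existed then, and Lemma~\ref{lem:smooth} then gives $\ALG(S_2\circ S_3)\ge \frac{c}{2+\eps}f(\OPT_{123})$, which dominates the window optimum because $[i-W+1,i]\subseteq[t_a,i]$ and the constraint is hereditary. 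Until you pin down $t_a$, $t_b$, $t_c$ in this way, the ``valid triple'' you hope exists does not; this is a missing idea in the proof rather than a routine detail, so the proposal as written does not establish the theorem.
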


\begin{algorithm2e}[t]
\DontPrintSemicolon 
\KwIn{$M$: an upper bound of $f(\OPT)$ over sliding windows.  $W$: the size of the window. $\ALG$ is an infinite window streaming algorithm which we use as a blackbox.}
\ForEach{new incoming element $e_i$}{
	start a new instance $\ALG^{(i)}$\;
	drop all maintained instances $\ALG^{(y)}$ where $y \le i - W$\;
	update all the remaining instances, denoted by $\ALG^{(t_1)}, \ldots, \ALG^{(t_u)}$, with $e_i$\;
	$j \gets 1$\;
	\While{$j < u$}{
                $x \gets u$\; \label{ln:a-1}
		\While{ $(1+\eps) \ALG^{(t_x)}(e_{t_x}, \ldots, e_i) < \ALG^{(t_j)}(e_{t_j}, \ldots, e_i)$}{
                  $x \gets x - 1$
                }	
                Prune all $\ALG^{(t_v)}$ with $j < v < x$\;  \label{ln:a-2}
                \If{$x \leq j$}{
                  $x \gets j + 1$
                }
                $j \gets x$
              }
}
\Return (at query) $\ALG^{(t_b)}(e_{t_b}, \ldots, e_{\now})$ where $t_b = \min \{t_j \in [\now - W + 1, \now]\ |\  \ALG^{(t_j)} \text{ is maintained}\}$
\caption{\swrd$(\mathcal{A}, W, M)$: A Reduction to Infinite Window Streaming Algorithms}
\label{alg:reduction}
\end{algorithm2e}

The pseudocode of the algorithm is described in Algorithm~\ref{alg:reduction}.  We now explain it in words.  The algorithm maintains a collection of instances of $\ALG$ starting at different times $t_1 < t_2 < \ldots < t_u$,  which we will denote by $\ALG^{(t_1)}, \ALG^{(t_2)}, \ldots, \ALG^{(t_u)}$. Upon receiving a new element $e_i$, we perform the following operations. We first create a new instance $\ALG^{(i)}$. Next, we drop those instances of $\ALG$ that are expired, and update all the remaining instances with the new element $e_i$. Finally we perform a pruning procedure: We start with $t_1$. Let $t_x$ be the maximum time step among all the maintained instances of $\ALG$ such that $(1+\eps) \ALG^{(t_x)}\ge \ALG^{(t_1)}$. We prune all the instances $\ALG^{(t_v)}$ where $1 < v < x$ (Line~\ref{ln:a-1}-\ref{ln:a-2}). We repeat this pruning procedure with (the next unpruned time step) $t_x$ and continue until we reach $t_u$. Note that the purpose of the pruning step is to make the remaining data stream satisfy $\ALG(S_1\circ S_2) \le (1+\eps) \ALG(S_2)$, so that Lemma \ref{lem:smooth} or Lemma \ref{lem:smooth2} applies.

The space usage of the algorithm is again easy to bound.  Note that after the pruning (we rename the remaining instances as $\ALG^{t_1}, \ALG^{t_2}, \ldots, \ALG^{t_u}$), for each $j = 1, \ldots, u-2$, we have $\ALG^{(t_j)} > (1+\eps) \ALG^{(t_{j+2})}$. Thus the number of instances of $\ALG$ is bounded by $O(1/\eps \cdot \log M)$ at all time steps.  When doing the pruning, we only need to calculate the value of each instance once, so the processing time per item is $O(t/\eps \cdot \log M)$.  

 We next give the proof of the correctness. Let us consider a window $[i - W + 1, i]$, and let $t_b = \min \{t_j \in [i - W + 1, i]\ |\ j = 1, \ldots, u \}$.  For this window we will report whatever $\ALG^{(t_b)}$ reports.   It is easy to see that if $t_b = i - W + 1$, then the algorithm is obviously correct since $\ALG^{(t_b)}$ is a streaming algorithm starting from time $t_b$.  We next consider the case when $t_b > i - W + 1$.  Let $t_c$ be the time step when the last $\ALG^{(t)}$ in $\{\ALG^{(i - W + 1)} \ldots, \ALG^{(t_b-1)}\}$ was pruned, and let $t_a < i - W  +1$ be the largest time step before $i-W+1$ such that $\ALG^{(t_a)}$ is active after the pruning step at time $t_c$. Note that $t_a$ must exist because pruning always happens between two active instances (at Line 9 of the algorithm, we prune between $j$ and $x$ exclusively). It is clear that $t_a < t_b \le t_c$. Let $S_1 = (e_{t_a}, \ldots, e_{t_b-1})$, $S_2 = (e_{t_b}, \ldots, e_{t_c})$, and $S_3 = (e_{t_c+1}, \ldots, e_i)$. By the pruning rule, we have $(1+\eps) \ALG(S_2) \ge \ALG(S_1\circ S_2)$.
Plugging in Lemma~\ref{lem:smooth}, we have 
\begin{equation}
 \label{eq:use-lemma}
\textstyle \ALG(S_2\circ S_3) \ge \frac{c}{2+\eps} \cdot f(\OPT_{123}),
\end{equation} 
   where $\OPT_{123}$ is the optimal solution for the stream $S_1\circ S_2 \circ S_3$, which includes the window $[i-W+1, i]$.
   
\smallskip

In \cite{BMKK14} the authors gave a $(1/2-\eps)$-compliant algorithm in the standard streaming model for monotone submodular maximization subject to cardinality constraint $k$, using $O(k \log k / \eps)$ space and $O(\log k/\eps)$ update time per item.  We thus have the following corollary.

\begin{corollary}
\label{cor:reduction-kdd}

There is an algorithm for monotone submodular maximization with a cardinality constraint over sliding windows that achieves a $(1/4 -\eps)$-approximation using $O(k \log k /\eps^2 \cdot \log M)$ words of space and $O(\log k /\eps^2 \cdot \log M)$ update time per item, where $k$ is the cardinality constraint.
\end{corollary}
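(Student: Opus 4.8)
The plan is to obtain the corollary as an essentially immediate instantiation of Theorem~\ref{thm:reduction}, taking as the black-box streaming algorithm $\ALG$ the \sieve{} algorithm of \cite{BMKK14}. First I would recall its guarantees in the standard (infinite window) streaming model: it keeps, for a geometric grid of $O(\log k/\eps)$ guesses of $f(\OPT)$ (knowing $\max_e f(e)$ online is enough to bracket $f(\OPT)$ within a factor of $k$, hence $O(\log k/\eps)$ grid points suffice), a greedily-built candidate solution per guess; this uses $O(k\log k/\eps)$ words of space and $O(\log k/\eps)$ marginal-gain evaluations per arriving item, and it reports a solution of value at least $(1/2-\eps)f(\OPT)$. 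The cardinality constraint $\calI=\{S:|S|\le k\}$ is hereditary, so Theorem~\ref{thm:reduction} applies as soon as we verify that $\ALG$ is $c$-compliant with $c = 1/2-\eps$.

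The $c$-approximation half of compliance is exactly the stated approximation guarantee. For the monotonicity half --- $\ALG(S_1)\le\ALG(S_2)$ whenever $S_1$ is a prefix of $S_2$ --- I would invoke the observation recorded just before Lemma~\ref{lem:smooth}: any streaming algorithm becomes monotone if it additionally remembers the best solution value seen so far and reports that. For \sieve{} this costs only $O(1)$ extra bookkeeping per item, since it already maintains the values of its $O(\log k/\eps)$ candidate solutions incrementally (each insertion into a candidate comes with a known marginal gain); taking a running maximum is therefore within the existing $O(\log k/\eps)$ per-item and $O(k\log k/\eps)$ space budgets. Hence we have a $(1/2-\eps)$-compliant streaming algorithm with $s = O(k\log k/\eps)$ and $t = O(\log k/\eps)$.

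Plugging these into Theorem~\ref{thm:reduction} yields a sliding-window algorithm with approximation ratio $\frac{c}{2+\eps} = \frac{1/2-\eps}{2+\eps}$, space $O(s/\eps\cdot\log M) = O(k\log k/\eps^2\cdot\log M)$, and update time $O(t/\eps\cdot\log M) = O(\log k/\eps^2\cdot\log M)$, which already matches the claimed resource bounds. The only remaining step is cosmetic: bounding the ratio from below. Writing $\frac{1/2-\eps}{2+\eps} = \frac{1/4-\eps/2}{1+\eps/2} \ge (\tfrac14-\tfrac\eps2)(1-\tfrac\eps2) \ge \tfrac14-\eps$ for all $\eps\in(0,1/2)$ (using $\frac{1}{1+x}\ge 1-x$ for $x\ge 0$) shows the reduction with parameter $\eps$ is already a $(1/4-\eps)$-approximation; if one prefers, running it with $\eps'=\Theta(\eps)$ gives the clean form and only rescales the space/time bounds by a constant. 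There is no genuine obstacle here --- the two points that require a moment's care are (i) the monotonicity patch and the check that it does not inflate the asymptotic bounds, and (ii) this last chain of inequalities converting $c/(2+\eps)$ into $1/4-\eps$; everything else is a direct substitution into Theorem~\ref{thm:reduction}.
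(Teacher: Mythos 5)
Your proposal is correct and follows exactly the paper's route: the paper obtains this corollary by citing that \sieve{} of \cite{BMKK14} is a $(1/2-\eps)$-compliant algorithm with $s=O(k\log k/\eps)$ and $t=O(\log k/\eps)$ and plugging it directly into Theorem~\ref{thm:reduction}. You merely spell out two details the paper leaves implicit --- the running-maximum patch that supplies monotonicity and the arithmetic turning $(1/2-\eps)/(2+\eps)$ into $1/4-\eps$ --- both of which check out.
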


If we drop the requirement of monotonicity, we have the following result.  The proof is the same as that for Theorem \ref{thm:reduction}, but uses Lemma \ref{lem:smooth2} instead of Lemma \ref{lem:smooth} in (\ref{eq:use-lemma}).
\begin{theorem}
\label{thm:reduction2}
There is an algorithm for constrained submodular maximization over sliding windows that achieves a $c^2/(c + 1 +\eps)$-approximation using $O(s/\eps \cdot \log M)$ space and $O(t/\eps \cdot \log M)$ update time per item, provided that there is a corresponding $c$-approximation streaming algorithm that uses $s$ space and $t$ update time per item.
\end{theorem}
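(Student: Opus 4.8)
The plan is to invoke exactly the same algorithm, \swrd$(\ALG, W, M)$ of Algorithm~\ref{alg:reduction}, now instantiated with the given $c$-approximation streaming algorithm as the black box, and to follow the proof of Theorem~\ref{thm:reduction} verbatim, substituting Lemma~\ref{lem:smooth2} for Lemma~\ref{lem:smooth} at the single point where the smoothness property is used. First I would check that the resource bounds are unaffected: immediately after the pruning pass triggered by the last update, the surviving instances $\ALG^{(t_1)}, \ldots, \ALG^{(t_u)}$ satisfy $\ALG^{(t_j)}(e_{t_j},\ldots,e_{\now}) > (1+\eps)\,\ALG^{(t_{j+2})}(e_{t_{j+2}},\ldots,e_{\now})$ for every $j$, since otherwise $\ALG^{(t_{j+1})}$ would have been pruned; combined with $1 \le f(\OPT) \le M$ this bounds the number of live instances by $O(\tfrac{1}{\eps}\log M)$, and as this counting argument never appeals to monotonicity, the space bound $O(s/\eps \cdot \log M)$ and the update time $O(t/\eps \cdot \log M)$ carry over unchanged.

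For correctness, fix a window $[i-W+1,i]$ and report the value of $\ALG^{(t_b)}(e_{t_b},\ldots,e_i)$, where $t_b$ is the earliest surviving start time inside the window. If $t_b = i-W+1$ there is nothing to prove, since $\ALG^{(t_b)}$ is then just a $c$-approximation streaming algorithm run on precisely the window. Otherwise I would reuse the decomposition from the proof of Theorem~\ref{thm:reduction}: let $t_c$ be the last step at which some start in $\{i-W+1,\ldots,t_b-1\}$ was pruned, let $t_a < i-W+1$ be the largest start still active just after that pruning (it exists because pruning always happens strictly between two active instances), and put $S_1 = (e_{t_a},\ldots,e_{t_b-1})$, $S_2 = (e_{t_b},\ldots,e_{t_c})$, $S_3 = (e_{t_c+1},\ldots,e_i)$, so that $S_2\circ S_3 = (e_{t_b},\ldots,e_i)$ is exactly the stream on which $\ALG^{(t_b)}$ runs. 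The pruning rule gives $(1+\eps)\,\ALG(S_2) \ge \ALG(S_1\circ S_2)$, which is the hypothesis of Lemma~\ref{lem:smooth2}.

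Applying Lemma~\ref{lem:smooth2} — whose hypotheses require only the $c$-approximation property, which $\ALG$ has — yields $\ALG(S_2\circ S_3) \ge \frac{c^2}{c+1+\eps}\, f(\OPT_{123})$, where $\OPT_{123}$ is optimal for $S_1\circ S_2\circ S_3$. Because this concatenated stream contains the window $[i-W+1,i]$ as a substream and the constraint is hereditary, every feasible solution for the window is feasible for $S_1\circ S_2\circ S_3$, so $f(\OPT_{123})$ is at least the window optimum; hence $\ALG^{(t_b)}(e_{t_b},\ldots,e_i) \ge \frac{c^2}{c+1+\eps}$ times the window optimum, as claimed. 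The one point I would be most careful about is the internal chain of Lemma~\ref{lem:smooth2}: it replaces the monotonicity step $\ALG(S_2)\le\ALG(S_2\circ S_3)$ of Lemma~\ref{lem:smooth} by $\ALG(S_2)\le f(\OPT_{23})\le\frac{1}{c}\ALG(S_2\circ S_3)$, the first inequality holding because the solution $\ALG$ outputs on $S_2$ is already a feasible solution over the larger stream $S_2\circ S_3$; everything else, including the submodularity step $f(\OPT_{12})+f(\OPT_{23})\ge f(\OPT_{123})$, is identical to the monotone case, and no other part of the reduction used monotonicity, so there are no further obstacles.
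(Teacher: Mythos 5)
Your proposal is correct and matches the paper's approach exactly: the paper proves Theorem~\ref{thm:reduction2} by running Algorithm~\ref{alg:reduction} unchanged and repeating the proof of Theorem~\ref{thm:reduction} with Lemma~\ref{lem:smooth2} substituted for Lemma~\ref{lem:smooth} at inequality (\ref{eq:use-lemma}), which is precisely what you do. Your additional checks --- that the instance-counting/space argument never invokes monotonicity, and that the chain inside Lemma~\ref{lem:smooth2} replaces $\ALG(S_2)\le\ALG(S_2\circ S_3)$ by $\ALG(S_2)\le f(\OPT_{23})\le\frac{1}{c}\ALG(S_2\circ S_3)$ --- are exactly the details the paper leaves implicit.
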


In~\cite{CK15}, the authors gave a $1/(4p)$-approximation algorithm in the standard streaming model for monotone submodular maximization subject to $p$-matroid constraints using $O(k)$ space, where $k$ is the maximum rank of the $p$-matroids.  We thus have:

\begin{corollary}
There is an algorithm for monotone submodular maximization subject to $p$-matroid constraints over sliding windows that achieves a $1/(4p+(1+\eps)16p^2)$-approximation using $O(k /\eps \cdot \log M)$ words of space, where $k$ is the maximum rank of the $p$-matroids.
\end{corollary}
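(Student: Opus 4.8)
The plan is to derive this corollary by instantiating Theorem~\ref{thm:reduction2} with the streaming algorithm of \cite{CK15} used as the black box $\ALG$. The first thing I would check is that the reduction of Section~\ref{sec:reduction} is applicable, i.e.\ that a $p$-matroid constraint is hereditary: a set that is independent in each of the $p$ matroids remains independent in each of them after deleting elements, so every subset of a feasible set is feasible. Thus Theorem~\ref{thm:reduction2} applies to monotone submodular maximization under $p$-matroid constraints, under the standing assumption $1 \le f(\OPT) \le M$.

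Next, recall from \cite{CK15} that there is a $c$-approximation streaming algorithm for this problem with $c = 1/(4p)$ using $s = O(k)$ words of space, where $k$ is the maximum rank among the $p$ matroids. Since \cite{CK15} only guarantees the $c$-approximation property and not the monotonicity condition of Definition~\ref{def:good-algo}, I would invoke Theorem~\ref{thm:reduction2} rather than Theorem~\ref{thm:reduction}. Plugging $c = 1/(4p)$ and $s = O(k)$ into Theorem~\ref{thm:reduction2} yields a sliding-window algorithm with approximation ratio $c^2/(c+1+\eps)$ and space $O(s/\eps \cdot \log M) = O(k/\eps \cdot \log M)$ words.

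Finally I would simplify the ratio: writing $c^2/(c+1+\eps) = \frac{1/(16p^2)}{1/(4p)+1+\eps}$ and multiplying numerator and denominator by $16p^2$ gives $\frac{1}{4p + 16p^2 + 16p^2\eps} = \frac{1}{4p + (1+\eps)16p^2}$, which is the claimed bound. There is essentially no obstacle here; the only point that deserves a moment's care is the choice of Theorem~\ref{thm:reduction2} over Theorem~\ref{thm:reduction} — we are deliberately not assuming the \cite{CK15} algorithm is monotone, and if one did establish monotonicity (or wrapped it with the best-so-far trick mentioned after Definition~\ref{def:good-algo}), Theorem~\ref{thm:reduction} would give the stronger ratio $\frac{1}{4p(2+\eps)}$ instead.
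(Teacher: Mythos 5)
Your proposal is correct and matches the paper's own derivation: the corollary is obtained exactly by instantiating Theorem~\ref{thm:reduction2} with the $1/(4p)$-approximation, $O(k)$-space streaming algorithm of \cite{CK15}, and the algebraic simplification of $c^2/(c+1+\eps)$ with $c=1/(4p)$ is as you computed. Your added remarks on heredity of the $p$-matroid constraint and on why Theorem~\ref{thm:reduction2} rather than Theorem~\ref{thm:reduction} is the right tool are consistent with (and slightly more explicit than) the paper.
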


A similar result can be obtained by plugging the deterministic approximation algorithm in \cite{CGQ15}.

%
%

\subsection{Heuristic Algorithms}
\label{sec:heuristic}
In this section, we introduce two heuristic algorithms based on \sieve\ proposed by \cite{BMKK14}. \sieve\ achieves a $(1/2 - \eps)$-approximation for cardinality constrained monotone submodular maximization in the standard streaming model. 

We briefly review how \sieve\ works. To simplify the description, we assume an upper bound of $f(\OPT)$ (denote as $M$) is given. \cite{BMKK14} also shows how one can get rid of this assumption by estimating $f(\OPT)$ on the fly.  The algorithm works as follows: we guess in parallel the thresholds $T = (1 + \eps)^0, (1 + \eps)^1, \ldots, (1 + \eps)^L$ where $L = \log_{1+\eps}M = O(\frac{\log M}{\eps})$. For each fixed $T$ we maintain a buffer $S$ as the solution over the data stream.  Upon receiving a new item $e_i$, we add it to the buffer if the cardinality constraint has not yet been violated (i.e. $|S| < k$) and the marginal gain $f(e_i | S) > (T / 2 - f(S)) / (k - |S|)$. \cite{BMKK14} shows that as long as $(1-\eps) f(\OPT) \le T \le f(\OPT)$, the corresponding $S$ satisfies $f(S) \geq (1-\eps) f(\OPT) / 2$. So we simply return the best among all buffers.

The first heuristic algorithm \snaive\ is very simple. For each threshold $T$ and its associated $S$ in \sieve, upon receiving a new item $e_i$, we first drop the expired item (if any). All other steps are exactly the same as \sieve. 

The second heuristic \sgreedy\ is a hybrid of \sieve\ and the standard greedy algorithm \greedy\ \cite{NWF78}. Let $c > 0$ be a parameter and $W$ be the window size. We maintain $B$ as a buffer of samples over the sliding window. Upon receiving a new item $e_i$, we add $e_i$ to $B$ with probability $c/W$, and drop expired item (if any) from $B$.  On the other hand, we maintain an instance of \sieve\ with the following modification: whenever an item $e$ in a buffer $S$ (associated with a certain $T$) expired, we update $S$ by using \greedy\ to choose a solution of size $(|S| - 1)$ from $B\cup S\backslash\{e\}$.

The pseudocodes of the two heuristics are presented in Algorithm \ref{alg:snaive} and Algorithm \ref{alg:sgreedy} respectively.

\begin{algorithm2e}[t]
\DontPrintSemicolon 
\KwIn{$k$ the cardinality constraint; $W$: the size of the window; $M$ an upper bound of  }
$L \gets\log_{1+\eps}M$\;
\ForEach{$T = (1+\eps)^0, (1 + \eps)^1, \ldots, (1 + \eps)^L$}{
  $S_T \gets \emptyset$
}
\ForEach{new incoming element $e_i$}{
  \ForEach{$T = (1+\eps)^0, (1 + \eps)^1, \ldots, (1 + \eps)^L$}{
    Drop expired item (if any) from $S_T$\;
    \If{$|S_T| < k$ and $f(e_i | S_t) > (T / 2 - f(S_T)) / (k - |S_T|)$}{
      $S_T \gets S_T \cup \{e_i\}$
    }
  }
}

\Return{(at query) $\argmax_{S_T}f(S_T)$}
\caption{\snaive($k, W, M$)}
\label{alg:snaive}
\end{algorithm2e}

\begin{algorithm2e}[t]
\DontPrintSemicolon 
\KwIn{$k$ the cardinality constraint; $W$: the size of the window; $M$ an upper bound of $f(\OPT)$; $c$ parameter to control the sample probability}
$L \gets\log_{1+\eps}M$\;
\ForEach{$T = (1+\eps)^0, (1 + \eps)^1, \ldots, (1 + \eps)^L$}{
  $S_T \gets \emptyset$
}
$B\gets \emptyset$\;
\ForEach{new incoming element $e_i$}{
  Add $e_i$ to $B$ with probability $\frac{c}{W}$\;
  Drop expired item (if any) from $B$\;
  \ForEach{$T = (1+\eps)^0, (1 + \eps)^1, \ldots, (1 + \eps)^L$}{
  \If{there exists an expired item $e$ in $S_T$}{
       $S_T \gets$  output of running \greedy\ on $B\cup S_T\backslash\{e\}$ with cardinality constraint $(|S_T| - 1)$\;
    }

    \If{$|S_T| < k$ and $f(e_i | S_T) > (T / 2 - f(S_T)) / (k - |S_T|)$}{
      $S_T \gets S_T \cup \{e_i\}$
    }
  }
}

\Return{(at query) $\argmax_{S_T}f(S_T)$}
\caption{\sgreedy($k, W, c, M$)}
\label{alg:sgreedy}
\end{algorithm2e}

\newcommand{\calK}{\mathcal{K}}
\newcommand{\bbR}{\mathbb{R}}

\section{Applications}
The class of submodular functions contains a broad range of useful functions. Here we discuss two examples that have been used extensively in operations research, machine learning, and data mining. The performance of our algorithms in these settings is discussed in the experiments section.

\subsection{Maximum Coverage}
Let $\S = \{S_1, S_2, \ldots, S_n\}$  be a collection of subsets of $[M] = \{1, 2, \ldots, M\}$. In the  Maximum Coverage problem, we want to find \emph{at most} $k$ sets from $\S$ such that the cardinality of their union can be maximized.  More precisely, we define the utility function as  $f(\S') = |\cup_{S\in \S'}S|$, where $\S'$ is a subset of $\S$. It is straightforward to verify that the utility function defined is monotone submodular. The Maximum Coverage problem is a classical optimization problem and it is NP-Hard. We can formulate it using our notations as 
$\argmax_{\S' \subseteq \S,~|\S'| \leq k} f(\S').$

\subsection{Active Set Selection in Kernel Machines}
Kernel machines \cite{SS02} are powerful non-parametric learning techniques. They use kernels to reduce non-linear problems to linear tasks that have been well studied. The data set $V = \{x_1, x_2, \ldots, x_n\}$ is represented in a transformed space via the $n \times n$ kernel matrix $K_V$ whose $(i,j)$-th cell is $\calK(x_i, x_j)$
where $\calK: V\times V \rightarrow \bbR$ is the kernel function which is symmetric and positive definite. 

For large-scale problems, even representing the matrix $K_V$, which requires $O(n^2)$ space, is prohibited. The common practice is to select a small representative subset $S \subseteq V$ and only work with $K_S$. One popular way to measure the quality of selected set $S$ is to use \emph{Informative Vector Machine} (IVM) introduced by Laurence et al.\ \cite{LSH03}. Formally, we define $f: 2^V \rightarrow \bbR$ with $f(S) = \frac{1}{2} \log\det\left( \mathbf{I} + \sigma^{-2} K_S \right)$, where $\mathbf{I}$ is the identity matrix and $\sigma > 0$ is a parameter. IVM has a close connection to the entropy of  muti-variable Gaussian distribution \cite{B14}. It has been shown that $f$ is a monotone submodular function (see, e.g., \cite{B14}). We can then select the set $S\subset V$ by solving $\argmax_{S:|S|\leq k} f(S)$.


\section{Experiments}
\label{sec:exp}

In this section, we compare the following algorithms experimentally. We use the objective functions introduced in the previous section, and the dataset is fed as a data stream. We try to continuously maximize the objective functions on the most recent $W$ data points.
\begin{itemize}
\item \greedy: the standard greedy algorithm (c.f.\ \cite{NWF78}); does not apply to sliding windows. 
\item \sieve: the Sieve Streaming algorithm in \cite{BMKK14}; does not apply to sliding windows.
\item \snaive: Algorithm \ref{alg:snaive}  in this paper. 
\item \sgreedy: Algorithm \ref{alg:sgreedy} in this paper.
\item \random: random sampling over sliding windows \cite{BDM02} (i.e. maintain a random $k$ samples of elements in the sliding window at any time). 
\item \swrd: Algorithm~\ref{alg:reduction} in this paper, using \sieve\ as the $c$-compliant algorithm.
\end{itemize}
Note that neither \greedy\ nor \sieve\ can be used for submodular maximization over sliding windows. We thus have to run them in each selected window individually. If we want to continuously (i.e. for all sliding windows) report the solutions, then we need to initialize one instance of \sieve\ or \greedy\ for each window, which is space and time prohibitive. 

We run \greedy\ as it provides a benchmark of the qualities of solutions. We run \sieve\ in selected windows since \swrd\ uses it as a subroutine and we want to see how good the solutions of \swrd\ is compared with the original \sieve\ in practice.  

We have implemented all algorithms in C++ with the support of the C++ linear algebra library {\tt Armadillo} \cite{arma10}. 
 All experiments are conducted on a laptop equipped with an Intel Core i5 1.7GHz x 2 processor and 4GB RAM. The operating system is Linux Mint 17.2.


\paragraph{Datasets} We use three time-series datasets. 
\begin{itemize}
\item \eye \cite{eye13}: this dataset is from one continuous EEG measurement with the Emotiv EEG Neuroheadset. The duration of the measurement is 117 seconds. The dataset contains $14,980$ instances, each of which can be considered as a vector of dimension $15$.

\item \gas \cite{FSH+15}: this dataset contains the acquired time series from 16 chemical sensors exposed to gas mixtures at varying concentration levels.  Together with $3$ other features, each record can be considered as a vector of dimension $19$. There are $4,178,504$ records in total. We normalize the dataset first by column, and then by row.

\item \wc \cite{wc98}: this dataset contains all the requests made to the 1998 World Cup Web site on June 7, 1998. There are $5,734,309$ requests made on that day and we consider the requested resource URLs in each second as a set. This results in $24 \times 3600 = 86,400$ sets. 
\end{itemize}

\begin{figure}[!ht]
  \centering$
  \arraycolsep=0.2pt\def\arraystretch{1.0}
  \begin{array}{cc}
    \includegraphics[width=0.35\textwidth]{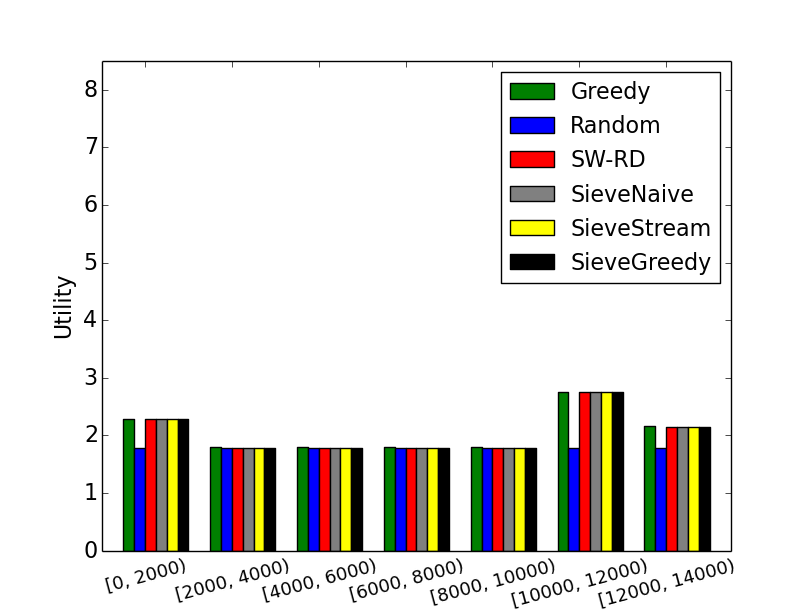}&
    \includegraphics[width=0.35\textwidth]{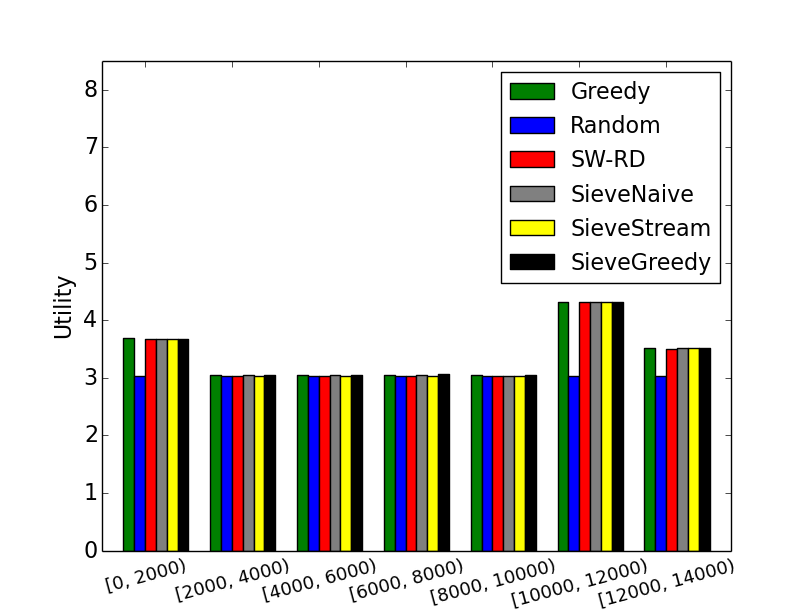}
\end{array}$
\caption{\eye\ dataset for active set selection; $k=5$ in the left figure, $k=20$ in the right;
  $W=2000$; $c = 20$ in \sgreedy; $x$-axis specifies the windows;  $y$-axis is the utility}
\label{fig:eye-acc}
\end{figure}

\begin{figure}[!ht]
  \centering$
  \arraycolsep=0.2pt\def\arraystretch{1.0}
  \begin{array}{cc}
    \includegraphics[width=0.35\textwidth]{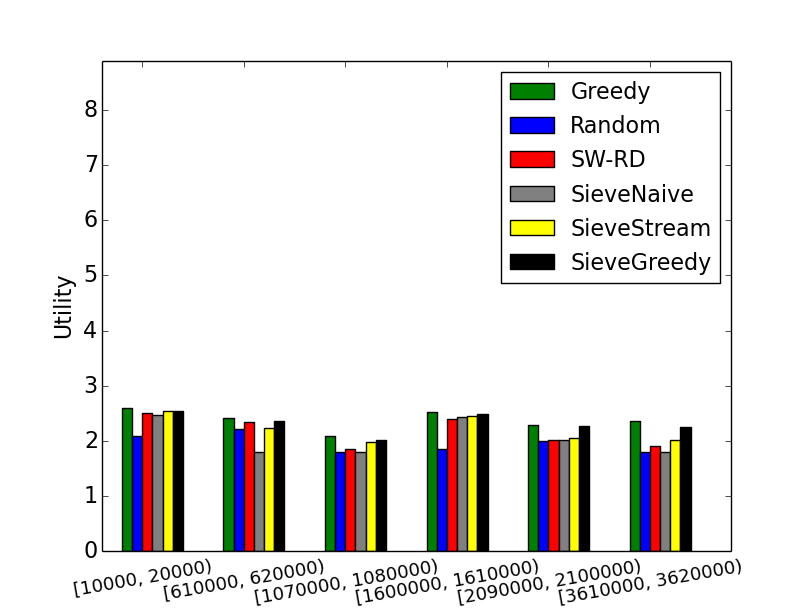}&
    \includegraphics[width=0.35\textwidth]{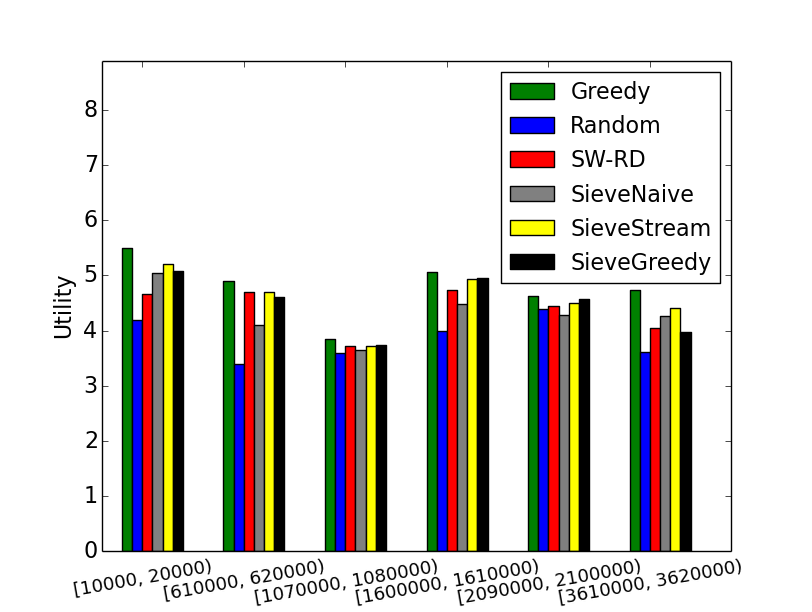}
  \end{array}$
\caption{\gas\ dataset for active set selection; $k=5$ in the left figure, $k=20$ in the right;
  $W=10000$; $c = 20$ in \sgreedy}
\label{fig:gas-acc}
\end{figure}

\begin{figure*}[!ht]
     \centering
     \subfloat[][$k=5$]{\includegraphics[width=0.35\textwidth]{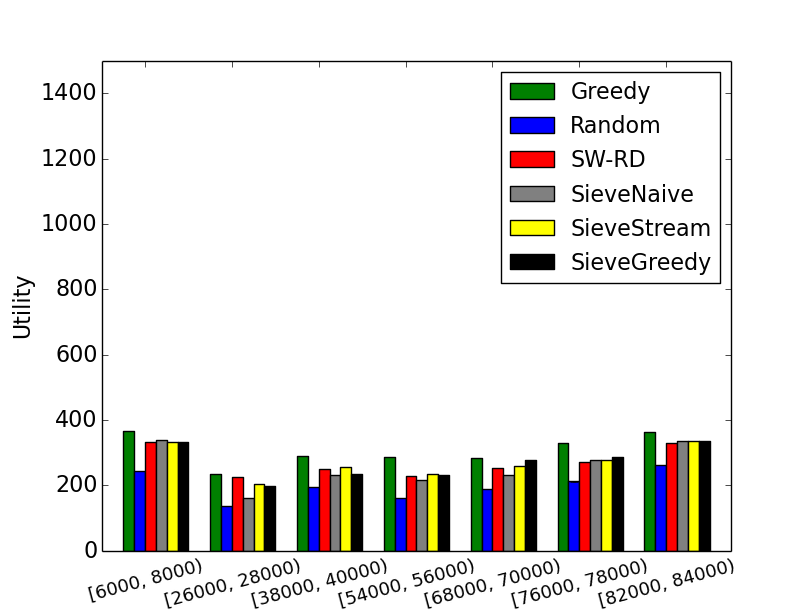}}
     \subfloat[][$k=20$]{\includegraphics[width=0.35\textwidth]{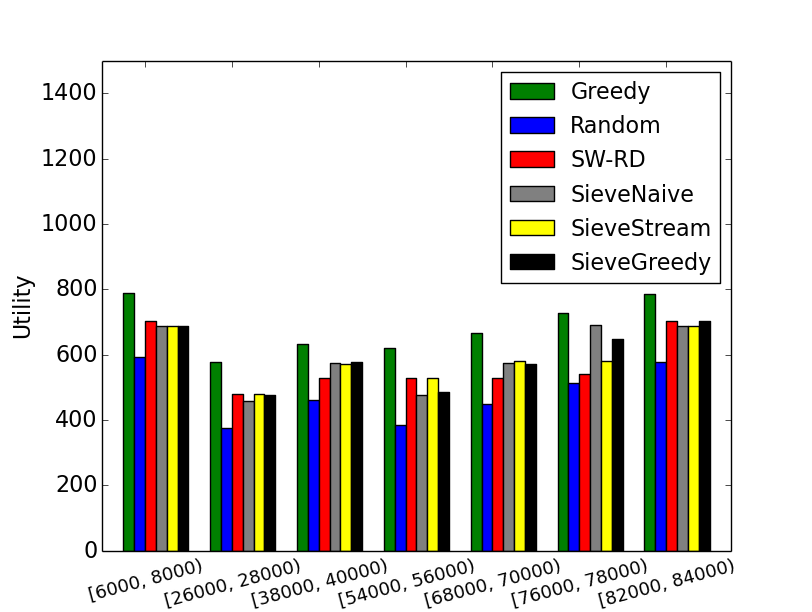}}     
     \subfloat[][\sgreedy,  $k=5$]{\includegraphics[width=0.35\textwidth]{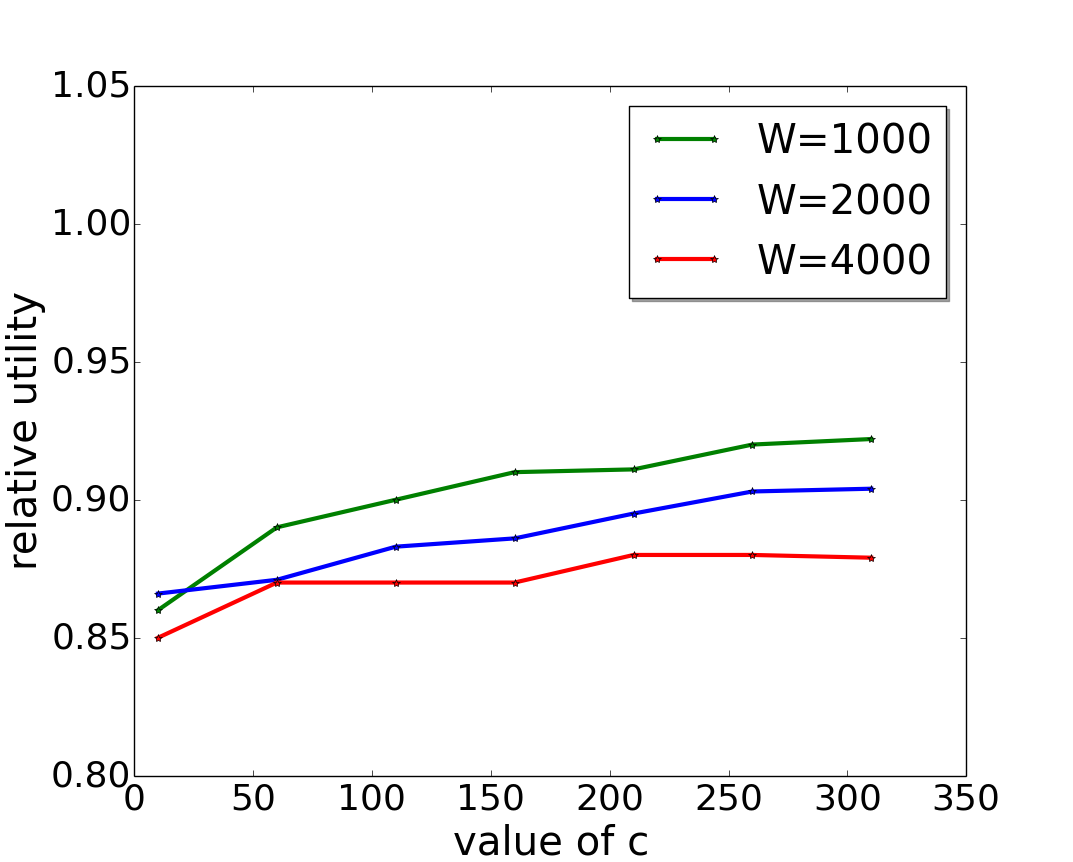}\label{fig:wc-sgreedy-c}}     
     \caption{\small \wc\ dataset for maximum coverage; 
     $W=2000$; $c = 20$ in \sgreedy\ except (\ref{fig:wc-sgreedy-c})
     }
     \label{fig:wc-acc}
\end{figure*}

\begin{figure*}[!ht]
     \centering
     \subfloat[][\swrd]{\includegraphics[width=0.35\textwidth]{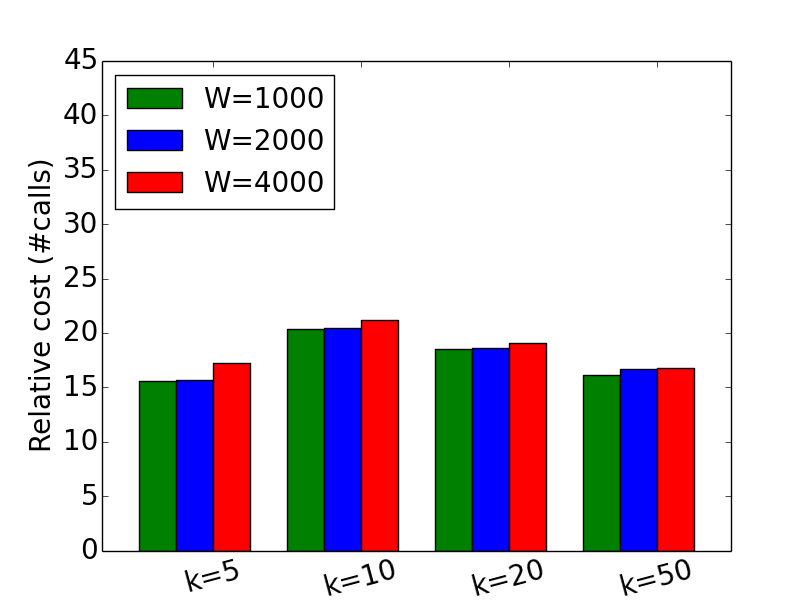}\label{fig:call-rd}}
     \subfloat[][\snaive]{\includegraphics[width=0.35\textwidth]{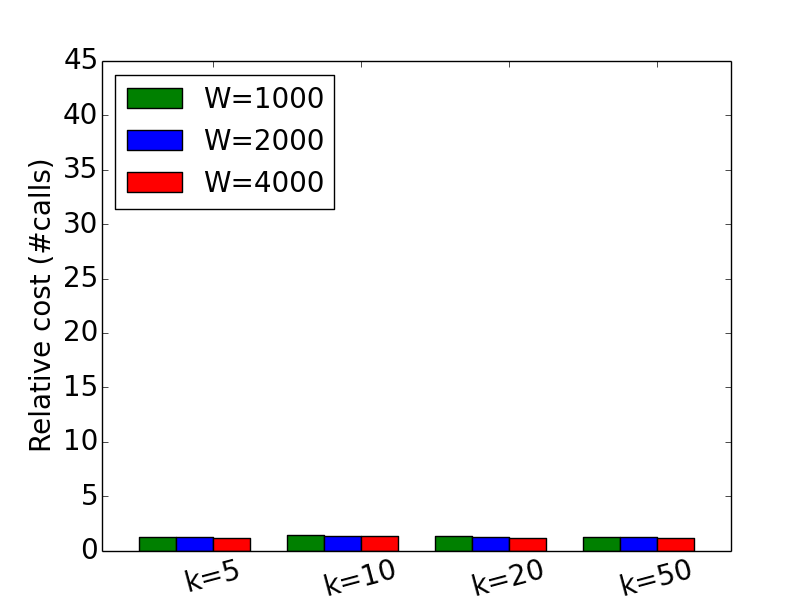}\label{fig:call-snaive}}\\
     \subfloat[][\sgreedy, $c=20$]{\includegraphics[width=0.35\textwidth]{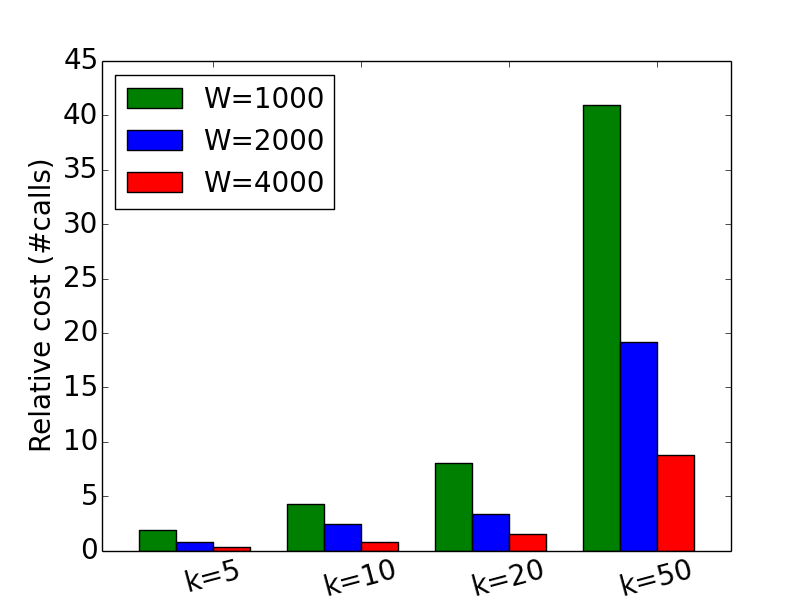}\label{fig:call-sgreedy}}
     \subfloat[][\sgreedy]{\includegraphics[width=0.35\textwidth]{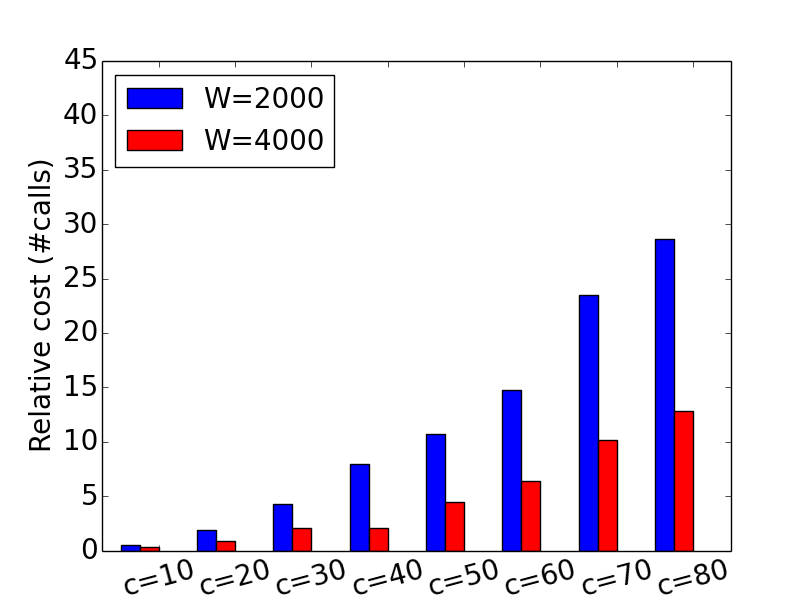}\label{fig:call-sgreedy-c}}
     \caption{\eye\ dataset for active set selection; \# function calls normalized by \sieve}\label{fig:eye-time}
\end{figure*}

\begin{figure*}[!ht]
     \centering
     \subfloat[][\sieve]{\includegraphics[width=0.35\textwidth]{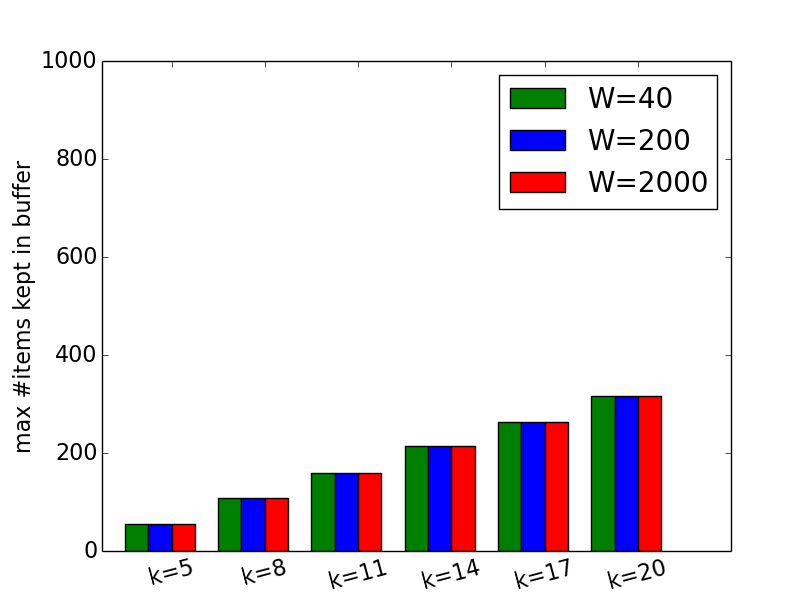}\label{fig:space-sieve}}
     \subfloat[][\swrd]{\includegraphics[width=0.35\textwidth]{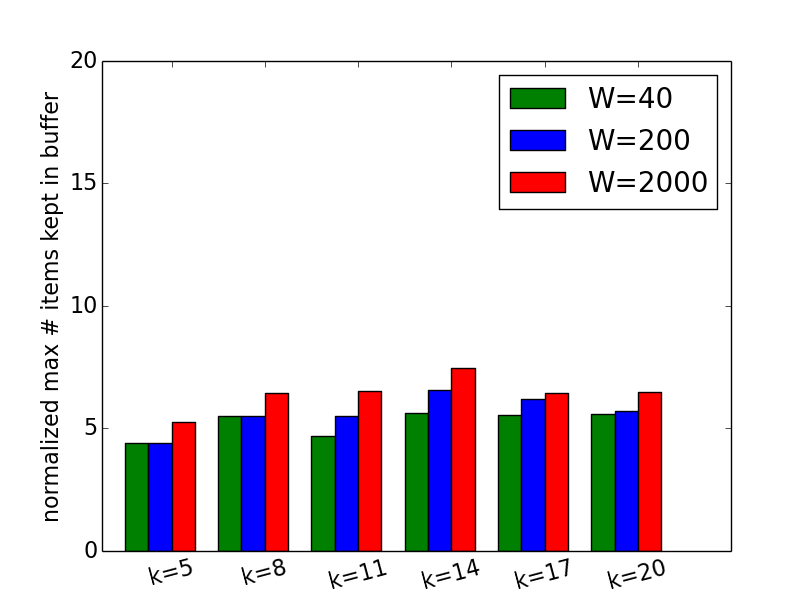}\label{fig:space-rd-nm}}\\
     \subfloat[][\snaive]{\includegraphics[width=0.35\textwidth]{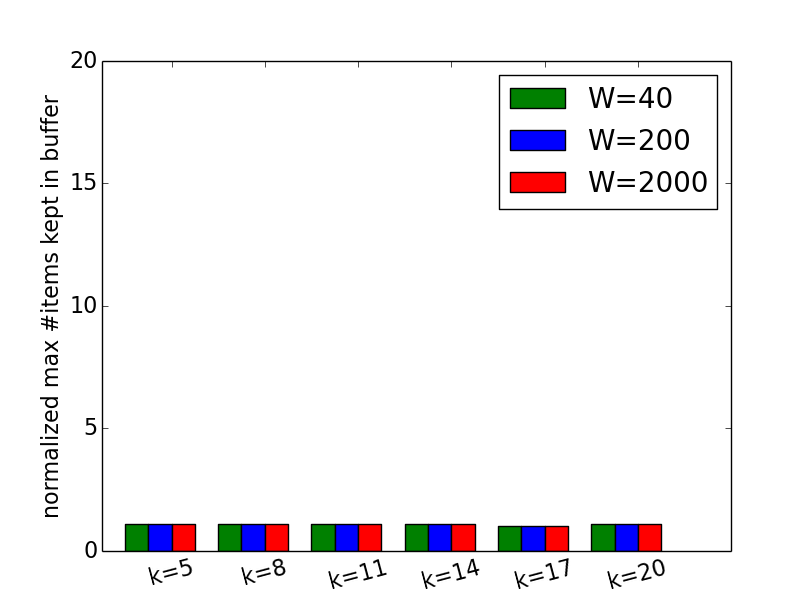}\label{fig:space-snaive-nm}}
     \subfloat[][\sgreedy, $c=20$]{\includegraphics[width=0.35\textwidth]{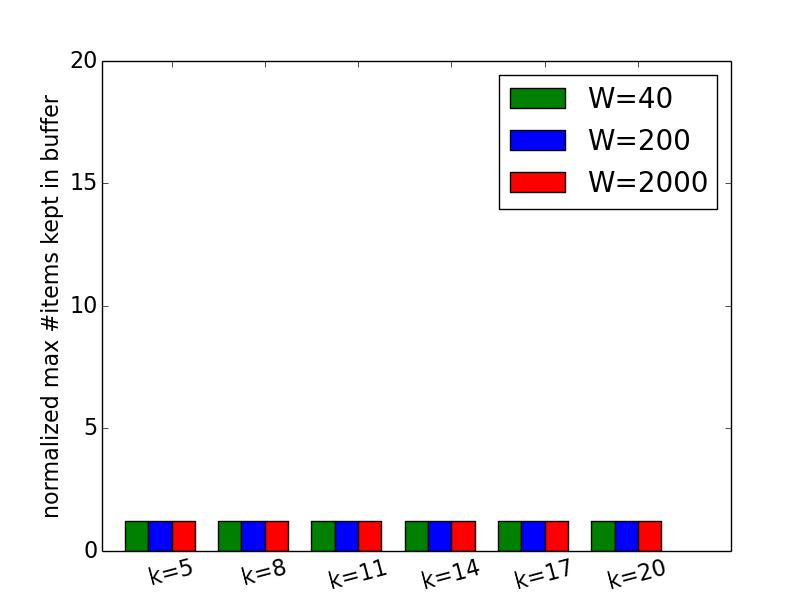}\label{fig:space-sgreedy-nm}}
     \caption{\eye\ dataset for active set selection; space usages measured by the peak number of items kept in the buffer; 
     (b), (c), and (d) are normalized by the space usages of \sieve}\label{fig:eye-space}
\end{figure*}


\paragraph{Discussion on the Results}
For the application of active set selection, we run experiments on both \eye\ and \gas\ datasets. We choose the squared exponential kernel as the kernel function: $\calK(x_i, x_j) = \exp(-\|x_i - x_j\|_2^2 / h^2)$;  we set $\sigma = 1$ and $h=0.75$. For the application of maximum coverage problem, we run experiments on the \wc\ dataset. For all algorithms, we set $\epsilon = 0.2$.  

It can be observed from Figure \ref{fig:eye-acc}, Figure \ref{fig:gas-acc} and Figure \ref{fig:wc-acc} that the maximum utility given by the standard greedy changes when we slide the window over the data stream. 
In Figure \ref{fig:eye-acc}, \sieve, \swrd, \sgreedy\ and \snaive\ generate results of almost the same quality as the one given by \greedy, and \random\ gives the worst results in all selected windows. In both Figure \ref{fig:gas-acc} and Figure \ref{fig:wc-acc}, results generated by \swrd, \snaive, \sgreedy\ and \sieve\ are slightly worse than the one given by \greedy.  In most windows, \sgreedy\ is as good as \sieve. \snaive\ also performs well in most windows, but it is worse than \random\ in some windows. In theory, \swrd\ can be worse than \sieve\ by a factor of $2$, but our experiments show that solutions returned by the two algorithms have similar utilities.
Figure \ref{fig:wc-sgreedy-c} shows in \sgreedy, 
increasing $c$ will slightly increase the utility. 



For the comparisons of space/time costs, we only include the results of \eye\ dataset due to the space constraints. Similar results can observed on other datasets as well. Figure \ref{fig:eye-time} compares the time costs on \eye\ dataset. We measure the time costs by the numbers of function calls (of the submodular function). All results are normalized by the corresponding costs of \sieve.  
By Theorem \ref{thm:reduction} the time cost of \swrd\ is independent of $k$ and $W$ once it is normalized by the corresponding cost of \sieve. This result has been confirmed by Figure \ref{fig:call-rd}. Figure \ref{fig:call-snaive} shows that \snaive\ is as fast as \sieve. Figure \ref{fig:call-sgreedy} shows that increasing $k$ will increase the cost of \sgreedy, while increasing $W$ will decrease the cost. This is because items in the solution buffers are less likely to expire for small $k$ and large $W$.
Figure \ref{fig:call-sgreedy-c} shows how the time costs of \sgreedy\ are affected by the values of $c$.




Figure \ref{fig:eye-space} compares the space costs on \eye\ dataset. To be consistent with the theorems, we measure the space usages by the maximum numbers of items kept in memory. To compare with the costs of \sieve, we also normalize the costs of \swrd, \snaive\ and \sgreedy\ by the corresponding costs of \sieve. 
Figure \ref{fig:space-snaive-nm} and Figure \ref{fig:space-sgreedy-nm} show that the space usages of \snaive\ and \sgreedy\ are almost the same as \sieve. 


\paragraph{Summary}
We conclude from our experiments that (1) the distribution of data stream changes over sliding windows in our tested datasets; (2) in terms of solution quality, \swrd, \snaive\ and \sgreedy\ generate comparable results as \sieve, and \random\ is clearly the worst. \snaive\ can sometimes perform very badly, while \swrd\ (the only algorithm with theoretical guarantees) and \sgreedy\ are relatively stable; and (3) \snaive\ is the most time and space efficient algorithm among \swrd, \snaive\ and \sgreedy, and the performance of \sgreedy\ is close for large window size and small $k$. For large window size and small $k$, \sgreedy\ runs very fast and the only extra space it uses compared with \sieve\ is the buffer of samples (i.e. $B$). Depending on the value of $\eps^{-1}\log M$, \swrd\ typically uses $10$-$20$x processing time and $10$-$20$x  space compared to \sieve.




\clearpage
\bibliography{paper}
\bibliographystyle{abbrv}


\end{document}